\newtheorem{theorem}{Theorem}  
\newtheorem{definition}{Definition}
\begin{document}

\title{A Plug-and-Play Multi-Criteria Guidance for \\ Diverse In-Betweening Human Motion Generation}




\author{Hua Yu, Jiao Liu, Xu Gui, Melvin Wong, Yaqing Hou,
        and Yew-Soon Ong,~\IEEEmembership{Fellow,~IEEE}%
\thanks{Manuscript received xxxx.}
\thanks{Hua Yu, Jiao Liu and Melvin Wong are with College of Computing and Data Science, Nanyang Technological University, Singapore (e-mail: yu\_hua@ntu.edu.sg, jiao.liu@ntu.edu.sg, WONG1357@e.ntu.edu.sg).}
\thanks{Xu Gui and Yaqing Hou are with the School of Computer Science and Technology, Dalian University of Technology, Dalian (e-mail: houyq@dlut.edu.cn, guixuysl@gmail.com)}
\thanks{Yew-Soon Ong is with College of Computing and Data Science, Nanyang Technological University, Singapore; Centre for Frontier AI Research, Institute of High Performance Computing, Agency for Science, Technology and Research, Singapore (e-mail: asysong@ntu.edu.sg) }
}


\maketitle

\begin{abstract}
In-betweening human motion generation aims to synthesize intermediate motions that transition between user-specified keyframes. In addition to maintaining smooth transitions, a crucial requirement of this task is to generate diverse motion sequences.
It is still challenging to maintain diversity—particularly when it is necessary for the motions within a generated batch sampling to differ meaningfully from one another due to complex motion dynamics. 
In this paper, we propose a novel method, termed the \textit{Multi-Criteria Guidance with In-Betweening Motion Model} (MCG-IMM), for in-betweening human motion generation. A key strength of MCG-IMM lies in its plug-and-play nature: it enhances the diversity of motions generated by pretrained models without introducing additional parameters. 
This is achieved by providing a sampling process of pretrained generative models with multi-criteria guidance. Specifically, MCG-IMM reformulates the sampling process of pretrained generative model as a multi-criteria optimization problem, and introduces an optimization process to explore motion sequences that satisfy multiple criteria, e.g., diversity and smoothness.
Moreover, our proposed plug-and-play multi-criteria guidance is compatible with different families of generative models, including denoised diffusion probabilistic models, variational autoencoders, and generative adversarial networks.
Experiments on four popular human motion datasets demonstrate that MCG-IMM consistently state-of-the-art methods in in-betweening motion generation task. 
\end{abstract}

\begin{IEEEkeywords}
In-betweening human motion generation, Multi-Criteria Guidance, Plug-and-play nature, Generative optimization process.
\end{IEEEkeywords}

\section{Introduction}
Generative models \cite{ho2020denoising,shafir2024human}, such as Variational Autoencoders (VAEs) \cite{kingma2019introduction}, Generative Adversarial Networks (GANs) \cite{goodfellow2020generative}, and Diffusion models \cite{song2020score,ho2020denoising}, have recently emerged as a promising technique in many tasks, e.g., computer animation \cite{tmm2,tmm6}, virtual reality \cite{tmm3,tmm5}, and human-computer interactions \cite{tmm1,tmm4}. As a classical task in animation and robotics, in-betweening human motion generation has been widely studied and addressed by using generative models. Given user-provided motion sequences, these models can interpolate smooth and diverse transition motions between the input sequences. This capability is particularly valuable for producing continuous, realistic animations and reducing the need for manual keyframe design, thereby saving time and effort in motion synthesis. The need to generate diverse outputs is particularly important in real-world scenarios. In the context of in-betweening human motion generation tasks \cite{tevet2023human}, producing varied human motions conditioned on user-provided sequences enables engineers to access a broad spectrum of transitional movements. This diversity allows users to select results that best align with specific application requirements, which is critical for fields such as AI-driven animation \cite{tmm7,tmm8,tmm9}.

\begin{figure}[tb]
    \centering
    \includegraphics[width=1\linewidth]{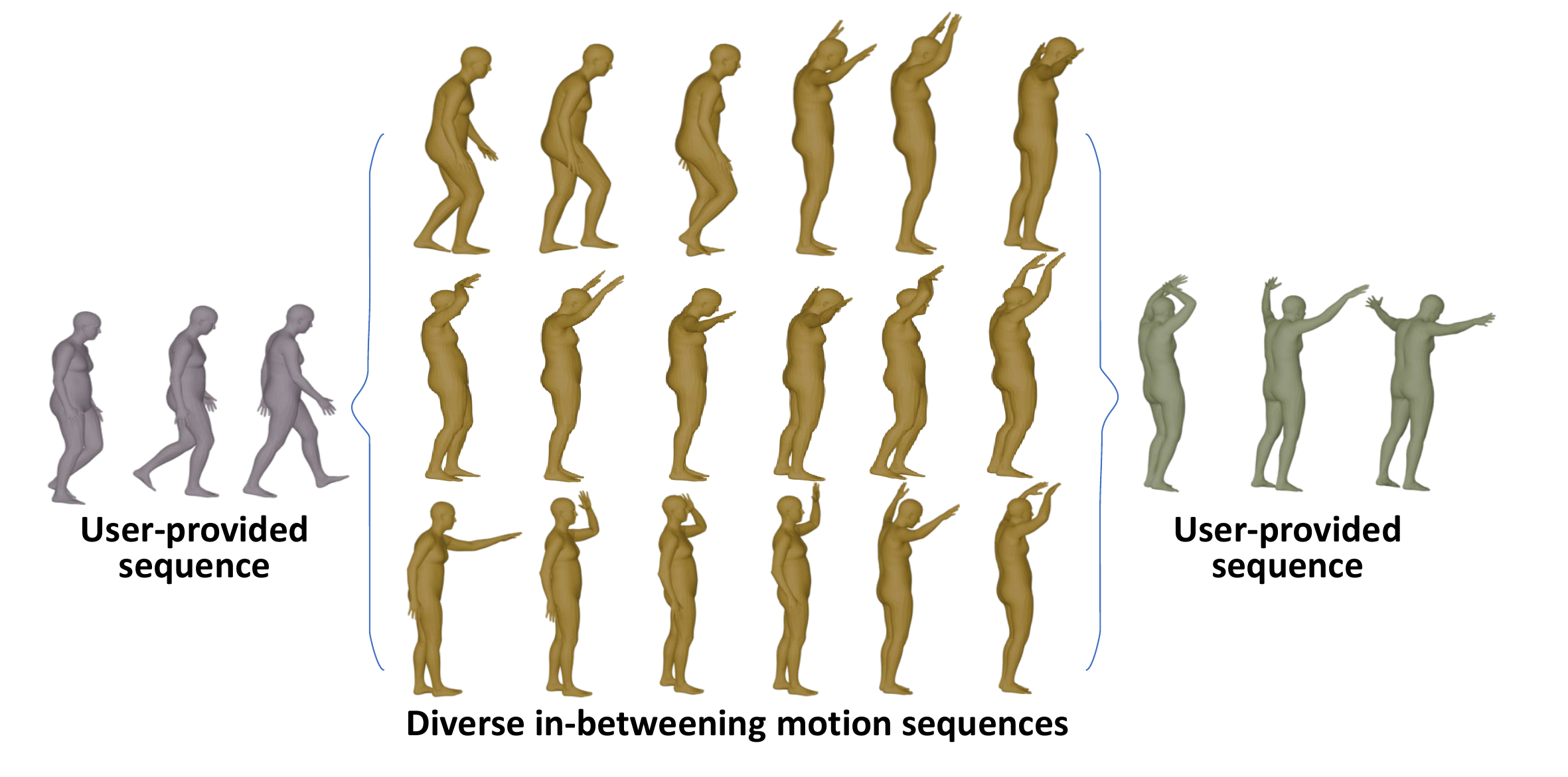}
    \caption{The example of our method MCG-IMM. Given two human motion sequences, the proposed method MCG-IMM can achieve in-betweening human motion generation with multi-criteria guidance, e.g., diversity and smoothness.}
    \label{fig:first}
\end{figure}

Although various generative models have been applied to this task \cite{vaswani2017attention,harvey2020robust}, it is still challenging for generative models to yield motion sequences with sufficient semantic diversity that satisfy users' requirements in one sampling pass.
For example, Wei \textit{et al.} \cite{mao2022weakly} utilized a VAE-based approach combined with Transformer \cite{vaswani2017attention} network to fill long-term missing motion frames. 
Zhou \textit{et al.} \cite{harvey2020robust} proposed a conditional GAN network to learn in-betweening human motions.
Diffusion methods, such as score-based generative models (SGMs) \cite{song_score-based_2021,song2020improved} and denoising diffusion probabilistic models (DDPM) \cite{ho2020denoising}, demonstrate exceptional generative performance than VAE and GAN based methods due to its stochastic processes \cite{tevet2023human, shafir2024human}. 
%

Despite the impressive performance of DDPMs, they suffer from reduced diversity under two key conditions. First, quality-oriented sampling acceleration methods—such as deterministic samplers like DDIM \cite{song2021denoising}—suppress the stochasticity essential for output variability, resulting in notable declines in motion diversity compared to vanilla DDPMs, especially on domain-restricted datasets \cite{kwon2022diffusion}. Second, strong conditional constraints further shrink the manifold of the learned distribution, causing outputs to cluster around dominant modes; this leads to a trade-off between condition accuracy and sample diversity, as evidenced by up to 22\% loss in kinematic variation in conditional settings \cite{tevet2023human,jiang2023motiondiffuser}.

Recently, enhancing the diversity of generative models for human motion in-betweening has been widely studied \cite{guo2020action2motion,mao2022weakly}. However, existing approaches primarily focus on architectural innovations—such as increasing model capacity \cite{yuan2020dlow,tmm8} or employing data-driven training strategies \cite{MotionInfilling,tmm10}—to expand the range of plausible motion sequences. These methods often come at the cost of increased training complexity, requiring task-specific modules or additional trainable parameters, thereby raising computational overhead. Moreover, they overlook a critical practical constraint: single batch diversity collapses under fixed model parameters, which means the generated motions lack dynamic diversity in a single batch sampling.
These challenges give rise to the underexplored question: whether it is possible to maximize diversity of motion sequences generated through single batch sampling, given a pretrained model and various conditions, without altering the backbone or introducing additional training parameters. 


In this paper, we design a plug-and-play \textit{Multi-Criteria Guidance with In-Betweening Motion Model} (MCG-IMM) for diverse in-betweening human motion generations. The proposed MCG-IMM framework addresses the challenge of fully unlocking the generative potential of a pretrained model to sample batches of diverse human motions. MCG-IMM operates at the sampling stage of a pretrained model, requiring no additional training. This also makes MCG-IMM a plug-and-play framework that can be seamlessly integrated with a wide range of generative models—including DDPMs, VAEs, and GANs—to effectively enhance the diversity of the generated motion sequences.
Unlike prior approaches that require fine-tuning or model-specific architectural changes \cite{kwon2022diffusion,jiang2023motiondiffuser}, our plug-and-play design enables seamless integration with any off-the-shelf generative backbone, significantly reducing development time and computational overhead.
It is demonstrated that, without introducing any additional training parameters, the proposed MCG-IMM effectively guides the generation of diverse in-betweening human motions. Moreover, as a plug-and-play framework, the proposed method can effectively enhance the diversity of various generative models, including DDPM, VAE, and GAN. 
The main contributions of this paper are summarized as follows:

\begin{itemize}
\item We propose MCG-IMM, a plug-and-play multi-criteria guidance for in-betweening human motion generation. 
The key advantage of MCG-IMM lies in its plug-and-play nature—it can be seamlessly integrated with various generative backbone models to produce diverse and smooth human motions without requiring retraining or parameter adjustment.
\item We introduce two carefully designed criteria in the multi-criteria optimization process to explore a broad range of realistic and coherent motion samples during the single batch sampling process. 
\item Systematic experiments are conducted to validate the effectiveness of the proposed MCG-IMM framework. The results demonstrate that, regardless of the underlying generative architecture, the proposed approach effectively guides the generation of in-betweening motion sequences that are both diverse and smooth.
\end{itemize}


The organization of the rest of this paper is as follows.
Section \ref{Related Work} presents the existing works related to our task. 
Section~\ref{MOP} presents the details of the proposed method, including the underlying principles of the designed criteria and the EA based sampling process.
Section \ref{ExpDesign} gives a description of the experimental design. Section \ref{Results} analyzes the results of the extensive experiment, including the quantitative and qualitative results. Finally, Section \ref{conclusion} concludes the paper.

\section{Related Work}\label{Related Work}
Diverse in-betweening human motion task aims to generate diverse and smooth motion sequences given user-provided sequences. 
A plethora of generative models, such as VAEs and GANs, have been applied to this task \cite{mao2022weakly, fragkiadaki2015recurrent,li2023sequential,agrawal2013diverse}. 
For example, 
Harvey \textit{et al.} \cite{harvey2020robust} utilized neural networks to generate plausible interpolation human motions between a given pair of keyframe poses. 
Tang \textit{et al.} \cite{tang2022real} built upon these findings by introducing a Convolutional Variational Autoencoder (CVAE) \cite{mao2022weakly}, which utilizes the motion manifold and conditional transitioning to generate real-time motion transitions.
Motion DNA \cite{zhou2020generative} proposed to automatically synthesize complex motions over a long time interval given very sparse keyframes by users for long-term in-betweening.
Li \textit{et al.} \cite{li2023sequential} transferred the end pose of the previous motion to the next for a cohesive transition. 
However, VAEs assume a Gaussian distribution as the posterior, which can limit the diversity of the generated samples \cite{10.1145}. Meanwhile, GANs tend to mainly generate samples from the major modes while ignoring the minor modes, further restricting the overall diversity.
More recently, denoising diffusion models \cite{ho2020denoising, lee2023multiact, guo2020action2motion,xie2024omnicontrolcontroljointtime} have been extensively utilized for motion generation. MoFusion \cite{dabral2023mofusion} proposed a denoising diffusion-based framework to generate motion in-betweening by fixing a set of keyframes in the motion sequence and reverse-diffusing the remaining frames.
OmniControl \cite{xie2024omnicontrolcontroljointtime} presents a novel method for text-conditioned human motion generation, which incorporates flexible spatial control signals over different joints at different times. Their diffusion backbone is based on the human motion diffusion method. 
Based on this, CondMDI \cite{10.1145/3641519.3657414} proposes a flexible in-betweening pipeline through random sampling keyframes and concatenating binary masking during training. 

Although various methods have been proposed to enhance diversity in in-betweening motion generation, most focus on modifying model architectures \cite{10814072,Hua_2025_CVPR} or improving training strategies—often at the cost of added parameters and increased complexity. Moreover, even with diverse sampling capabilities, models still struggle with generating distinct motions within a single batch, a challenge of intra-batch diversity that is often overlooked. Unlike prior works \cite{tan2025fast,wei2025evolvable}, this paper shifts the focus to the sampling process, introducing a plug-and-play guidance module that promotes intra-batch diversity without requiring additional training or parameter tuning. The method is lightweight and broadly compatible with existing generative models.

\section{Proposed Method}\label{MOP}

In this section, we present the proposed plug-and-play multi-criteria guidance designed to generate diverse human motions. We begin by introducing two criteria specifically designed to guide the sampling process, grounded in the principles of multi-criteria optimization. Next, we describe how these criteria are employed to steer the optimization process in exploring diverse motion samples based on a pretrained generative model. We then use the DDPM as an example to illustrate the acquisition and use of a pretrained backbone. Finally, we summarize the overall framework of the proposed method.

\subsection{ Multi-Criteria for Human Motion In-Betweening Task}
\begin{figure*}
    \centering
    \includegraphics[width=1\linewidth]{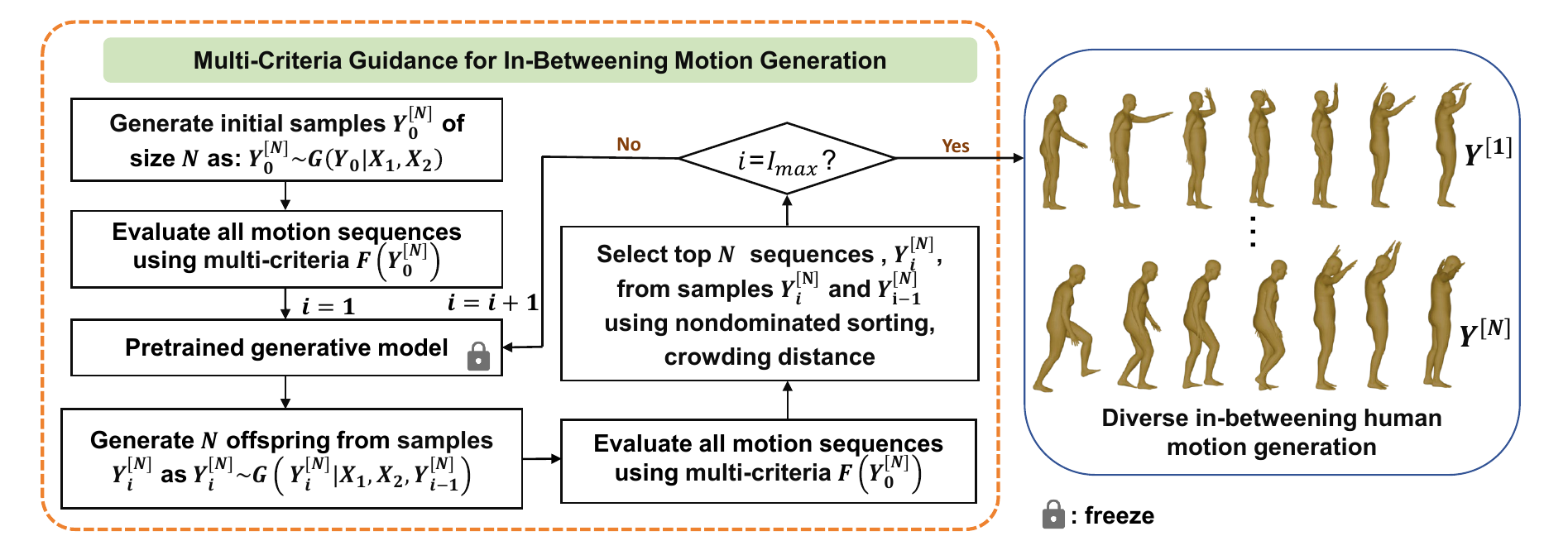}
    \caption{The illustration of multi-criteria guidance for diverse in-betweening human motion sequences. $Y$ is the generated sequence. $G$ denotes the pretrained generator of the generative model. $F$ denotes the designed multi-criteria function. }
    \label{fig:moea}
\end{figure*}

In this section, we introduce the two formulated criteria used to guide the sampling process. These criteria comprise two components: the \textit{Diversity Component}, which encourages variation among generated motions, and the \textit{Smoothness Component}, which ensures temporal coherence and realism.

\begin{itemize}
    \begin{figure}[!h]
        \centering
        \includegraphics[width=1\linewidth]{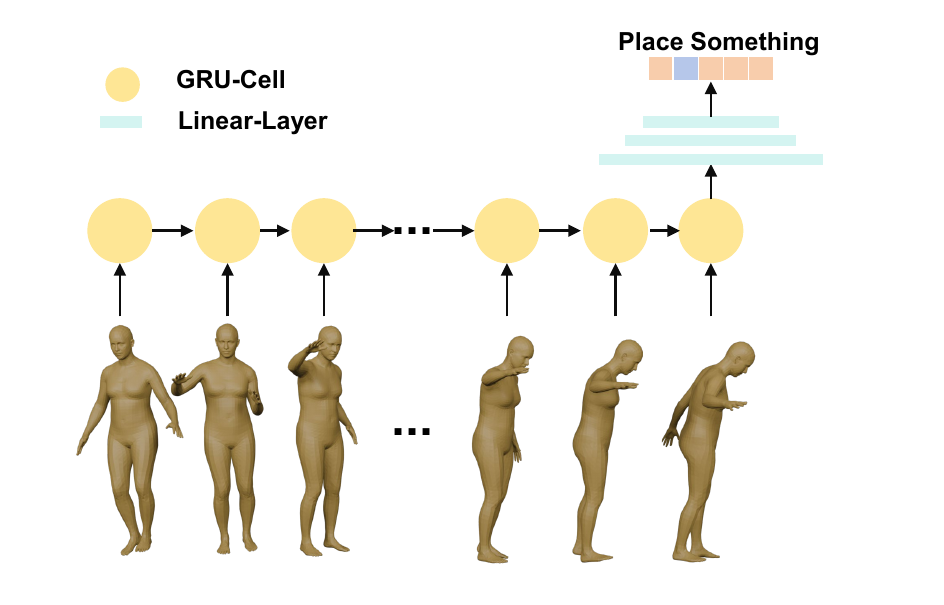}
        \caption{The illustration of a multi-class classifier model for the Diversity Component in the multi-criteria. }
        \label{fig:cls}
    \end{figure}
    
    \item \textbf{\textit{Diversity Component}}: {
    The diversity of human motions is reflected in the diversity of action categories and the diversity of intra-class differences within the same human action type.
    We assume the availability of a classifier $C({Y})$ that can categorize the motion type represented by a generated motion sequence $Y$. In Fig. \ref{fig:cls}, we provide the network structure of the multi-class classifier model used for the \textit{Diversity Component}. This model consists of GRU layers~\cite{69e088c8129341ac89810907fe6b1bfe} to encode the temporal information and MLP layers to produce the final classification results. 
   The classifier is assumed to be capable of distinguishing $D$ types of motions, each labeled by an integer in the set $\{ 0,1,\cdots, D-1 \}$.} Thus, for any sequence $Y$, we have $C({Y}) \in \{ 0,1,\cdots,D-1 \}$. 
   Thus, $C({Y})$ can distinguish between different categories and encourage the generation of samples across action categories.
   In addition, the classifier can provide the probability $P_{c}(Y)$ that a given motion sequence belongs to each categorized motion type, which denotes different probabilities corresponding to the same action label (intra-class difference within the same human action type).
   $P_{c}(Y)$ reflects the changes in internal samples of the category, and helps to avoid generating single poses in the same category.
   Based on the above definition, the diversity component is formulated as the following two functions $\alpha_1(Y)$ and $\alpha_2(Y)$:
    \begin{equation}\label{EqMulObj1}
    \begin{aligned}
    \begin{cases}
    \alpha_1 (Y) = \frac{1}{D} \left(C(Y) + P_{c}(Y) \right), \\
    \alpha_2 (Y) = 1- \alpha_1 (Y).
    \end{cases}
    \end{aligned}
    \end{equation}

    \item \textbf{\textit{Smoothness Component}}: 
    {A smoothness function $\beta (Y)$ is utilized to facilitate the smoothness of adjacent motion sequences, which is formulated as follows:
    \begin{equation}\label{eqn:smooth_comp}
        \beta(Y) = \|X_1[-1]-Y[0]\| + \|Y[-1]-X_2[0]\|.
    \end{equation}
    In \eqref{eqn:smooth_comp}, $X_1[-1]$ and $Y[-1]$ denote the last human pose of the user-provided motion sequence $X_1$ and the in-betweening sequence $Y$, respectively. $X_2[0]$ and $Y[0]$ denote the first human pose of the motion sequence $X_2$ and the generated in-betweening sequence $Y$, respectively.
    $\beta(Y)$ is utilized to denote the offset between the generated in-betweening motion sequences $Y$ and the adjacent motion sequences ($X_1$ and $X_2$). This operation aims to guarantee that the changes between the adjacent human motion poses are not too large, thereby enhancing the smoothness between different action sequences.}
\end{itemize}

By integrating the diversity and smoothness components, a \textit{multi-criteria optimization problem} for the diverse human motion in-betweening task can be formulated as follows:
\begin{align}
\begin{aligned}\label{EqMulObj2}
   \begin{cases}
    \min : F_1 (Y)= \alpha_1 (Y) + \beta (Y), \\
    \min : F_2 (Y) = \alpha_2 (Y) + \beta (Y).
   \end{cases}
\end{aligned}
\end{align}
The multi-criteria optimization problem is a classical concept in optimization that involves simultaneously optimizing multiple, often conflicting, objectives~\cite{8476217,6766232,10195116}. Several fundamental definitions associated with multi-criteria optimization are outlined below.
\begin{definition}[Pareto Dominance]
For decision vectors $\textbf{x}_a$ and $\textbf{x}_b$, if $\forall i \in \{1,2,\cdots,m\}$, $f_{i}(\textbf{x}_a) \leq f_{i}(\textbf{x}_b)$ and $\exists j \in \{1,2,\cdots,m\}$, $f_{j}(\textbf{x}_a) < f_{j}(\textbf{x}_b)$, $\textbf{x}_a$ is said to {Pareto dominate} $\textbf{x}_b$.
\end{definition}

\begin{definition}[Pareto-Optimal Solution]
If no decision vector in $\mathcal{X}$ Pareto dominates $\textbf{x}_a$, then $\textbf{x}_a$ is a {Pareto-optimal} solution.
\end{definition}
Given the above definitions, the \textit{Pareto set} refers to the set of all Pareto-optimal solutions in the decision space, while the \textit{Pareto front} denotes the image of the Pareto set in the criteria space. Subsequently, we introduce two theorems that formally summarize the key properties of the constructed multi-criteria optimization problem.


\begin{theorem}
Let $\mathcal{B} = \{ Y_b| Y_b = \arg \min_Y \beta (Y)\}$. Assuming $|\mathcal{B}| \geq 2$, we can state that any in-betweening motion $Y_b \in \mathcal{B}$ is a Pareto optimal solution of the multi-criteria optimization problem defined in \eqref{EqMulObj2}.
\end{theorem}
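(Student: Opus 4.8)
The plan is to exploit the complementary structure of the two diversity functions. From the definition in \eqref{EqMulObj1} we have $\alpha_2(Y) = 1 - \alpha_1(Y)$, so that $\alpha_1(Y) + \alpha_2(Y) = 1$ holds identically for every candidate sequence $Y$. First I would add the two objectives in \eqref{EqMulObj2} to obtain
\begin{equation}
F_1(Y) + F_2(Y) = \alpha_1(Y) + \alpha_2(Y) + 2\beta(Y) = 1 + 2\beta(Y).
\end{equation}
This identity is the crux of the argument: it shows that the equally weighted scalarization $F_1 + F_2$ is an increasing affine function of the smoothness term $\beta$ alone, with the diversity contribution canceling out entirely.

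Next I would invoke the defining property of $\mathcal{B}$. By construction, every $Y_b \in \mathcal{B}$ attains the global minimum $\beta^{\star} := \min_Y \beta(Y)$, and therefore $F_1(Y_b) + F_2(Y_b) = 1 + 2\beta^{\star}$ is the smallest value that the sum $F_1 + F_2$ can take over the entire decision space. In other words, each $Y_b$ is a global minimizer of the scalarized objective.

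To conclude, I would argue by contradiction using the Pareto dominance definition. Suppose some $Y_b \in \mathcal{B}$ were \emph{not} Pareto optimal; then there would exist a sequence $Y'$ with $F_i(Y') \le F_i(Y_b)$ for both $i \in \{1,2\}$ and strict inequality for at least one index. Summing these relations gives $F_1(Y') + F_2(Y') < F_1(Y_b) + F_2(Y_b) = 1 + 2\beta^{\star}$, whence $\beta(Y') < \beta^{\star}$, contradicting the minimality of $\beta^{\star}$. Hence no sequence can dominate $Y_b$, and $Y_b$ is Pareto optimal.

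I expect the proof to be essentially immediate once the constant-sum identity $\alpha_1(Y) + \alpha_2(Y) = 1$ is recognized; the only conceptual step is seeing that minimizing $\beta$ coincides with minimizing the scalarization $F_1 + F_2$, which is the standard sufficient condition for Pareto optimality. There is thus no genuine obstacle here—the result hinges entirely on that structural observation. I would add one clarifying remark: the hypothesis $|\mathcal{B}| \ge 2$ is not actually needed for the optimality of a single $Y_b$, since the dominance argument applies to each minimizer independently. Rather, that assumption guarantees that the Pareto set contains at least two distinct smoothness-minimizers, which is precisely what makes the result meaningful for promoting intra-batch diversity. Making this explicit prevents the reader from mistaking $|\mathcal{B}| \ge 2$ for a logical prerequisite of the dominance argument.
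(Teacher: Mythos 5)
Your proof is correct and follows essentially the same route as the paper's: both rely on the constant-sum identity $F_1(Y)+F_2(Y)=1+2\beta(Y)$ and derive a contradiction with the minimality of $\beta$ under an assumed dominating sequence. Your version is in fact slightly more careful, since you apply the paper's own Pareto-dominance definition ($\leq$ in both objectives with strict inequality in at least one, which still forces a strict inequality in the sum) rather than assuming both inequalities strict, and your remark that $|\mathcal{B}|\geq 2$ is not needed for the dominance argument itself is accurate.
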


\begin{proof}
Assuming that there exist an in-betweening motion sequence $Y_b^{*}$ that can Pareto dominate $Y_b$ in problem \eqref{EqMulObj2}, then we have:
\begin{align}
    \begin{aligned}
        F_1 (Y_b^{*}) <  F_1 (Y_b) , \\
        F_2 (Y_b^{*}) <  F_2 (Y_b) .
    \end{aligned}
\end{align}
Thus, we can obtain that 
\begin{align}
    \begin{aligned}
    \label{proof1}
        & F_1 (Y_b^{*}) + F_2 (Y_b^{*}) <  F_1 (Y_b) + F_2 (Y_b) \\
        \Rightarrow & \alpha_1 (Y_b^{*}) + \alpha_2 (Y_b^{*}) + 2\beta (Y_b^{*}) < \alpha_1 (Y_b) + \alpha_2 (Y_b) + 2\beta (Y_b) \\
        \Rightarrow & 1 + 2\beta (Y_b^{*}) < 1 + 2\beta (Y_b) \\
        \Rightarrow & \beta (Y_b^{*}) < \beta (Y_b),
    \end{aligned}
\end{align}
where line 2 and line 3 in \eqref{proof1} are obtained based on the \eqref{EqMulObj1}. It should be noticed that, we have $\beta (Y_b^{*}) < \beta (Y_b)$ from \eqref{proof1}, and this contradicts the assumption that $Y_b \in \mathcal{B}$. Therefore, there is no in-betweening human motion sequence that can Pareto dominate $Y_b$, which demonstrates our proof.
\end{proof}

\textit{Theorem 1} establishes that all in-betweening human motion sequences that minimize the smoothness component $\beta(\cdot)$ are Pareto optimal solutions for the multi-criteria optimization problem defined in \eqref{EqMulObj2}. This result implies that solving \eqref{EqMulObj2} to identify its Pareto optimal motions inherently favors those with low, or even minimum, values of $\beta(\cdot)$. Given that $\beta(\cdot)$ governs the smoothness of transitions between an initial human pose and subsequent in-betweening motions, the Pareto optimal solutions are expected to facilitate smooth motion transitions, thereby ensuring that the generated motions exhibit continuity and smoothness. 

\begin{theorem} 
Let $\mathcal{B} = \{ Y_b| Y_b = \arg \min_Y \beta (Y)\}$, and $Y_1, Y_2 \in \mathcal{B}$. If $||\textbf{F}(Y_1),\textbf{F}(Y_2)|| > \frac{4}{D}$, where $\textbf{F}(\cdot) = ({F}_1(\cdot),{F}_2(\cdot))$ and $||\cdot||$ is the Manhattan distance, then $C(Y_1) \neq C(Y_2)$.
\end{theorem}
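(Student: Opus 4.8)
The plan is to argue by contradiction, exploiting the fact that the smoothness component $\beta(\cdot)$ is constant across $\mathcal{B}$. Since $Y_1, Y_2 \in \mathcal{B}$ both minimize $\beta$, we have $\beta(Y_1) = \beta(Y_2)$. Hence, when forming the componentwise differences $F_i(Y_1) - F_i(Y_2)$, the shared $\beta$-terms cancel, giving $|F_1(Y_1) - F_1(Y_2)| = |\alpha_1(Y_1) - \alpha_1(Y_2)|$ and $|F_2(Y_1) - F_2(Y_2)| = |\alpha_2(Y_1) - \alpha_2(Y_2)|$. This is the crucial reduction: on $\mathcal{B}$ the entire $\textbf{F}$-distance is driven by the diversity component alone.

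Next, I would simplify using the defining relation $\alpha_2 = 1 - \alpha_1$ from \eqref{EqMulObj1}, which forces $|\alpha_2(Y_1) - \alpha_2(Y_2)| = |\alpha_1(Y_1) - \alpha_1(Y_2)|$, so that the Manhattan distance equals $2|\alpha_1(Y_1) - \alpha_1(Y_2)|$. Substituting $\alpha_1(Y) = \tfrac{1}{D}(C(Y) + P_c(Y))$ then yields the explicit expression
\[
\| \textbf{F}(Y_1) - \textbf{F}(Y_2) \| = \frac{2}{D} \bigl| \bigl(C(Y_1) - C(Y_2)\bigr) + \bigl(P_c(Y_1) - P_c(Y_2)\bigr) \bigr| .
\]

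I would then suppose, for contradiction, that $C(Y_1) = C(Y_2)$. The integer class-difference term vanishes, leaving only the probability difference; since $P_c(\cdot)$ is a probability confined to $[0,1]$, we have $|P_c(Y_1) - P_c(Y_2)| \le 1$. Consequently $\| \textbf{F}(Y_1) - \textbf{F}(Y_2) \| \le \tfrac{2}{D} \le \tfrac{4}{D}$, which contradicts the hypothesis $\| \textbf{F}(Y_1) - \textbf{F}(Y_2) \| > \tfrac{4}{D}$. Therefore $C(Y_1) \neq C(Y_2)$, as claimed.

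The argument is a short chain of elementary bounds rather than a deep result, so I do not anticipate a genuine obstacle; the only points needing care are justifying the cancellation of the $\beta$-terms (which rests entirely on the membership $Y_1, Y_2 \in \mathcal{B}$) and fixing the admissible range $[0,1]$ of $P_c(\cdot)$ that makes the final bound valid. It is worth observing that the threshold $\tfrac{4}{D}$ is conservative: the cancellation together with the probability bound already guarantees a tighter separation of $\tfrac{2}{D}$ whenever the labels coincide, so the stated inequality leaves a comfortable safety margin.
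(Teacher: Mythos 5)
Your proof is correct and follows essentially the same route as the paper's: both cancel the shared $\beta$-terms on $\mathcal{B}$, use $\alpha_2 = 1-\alpha_1$ to reduce the Manhattan distance to $\tfrac{2}{D}\lvert C(Y_1)+P_c(Y_1)-C(Y_2)-P_c(Y_2)\rvert$, and invoke $\lvert P_c(Y_1)-P_c(Y_2)\rvert \le 1$; you merely phrase the final step as a contradiction (the contrapositive) where the paper argues directly to $\lvert C(Y_1)-C(Y_2)\rvert > 1$. Your closing observation that the threshold $\tfrac{4}{D}$ is conservative and $\tfrac{2}{D}$ already suffices is a valid and worthwhile remark not made in the paper.
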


\begin{proof}
Considering that $Y_1, Y_2 \in \mathcal{B}$, we have $\beta(Y_1) = \beta(Y_2) = \min_Y \beta(Y) $. According to \eqref{EqMulObj2}, we have the following equations: 
\begin{align}
\begin{aligned}
    & {F}_1(Y_1)-{F}_1(Y_2) 
    \\ &= \frac1D\left({C(Y_1)}+P_c(Y_1)-C(Y_2)-P_c(Y_2)\right)  \\
    & F_2(Y_1)-F_2(Y_2)  
    \\&=  -\frac1D(C(Y_1)+P_c(Y_1)) +\frac1D(C(Y_2)+P_c(Y_2))
\end{aligned}
\end{align}
Then, we can obtain that  
\begin{align}
\begin{aligned}
   &  ||\mathbf{F}(Y_1)-\mathbf{F}(Y_2)|| \\
   &= |{F}_1(Y_1)-{F}_1(Y_2) | + | F_2(Y_1)-F_2(Y_2) | >\frac4D \\
  \Rightarrow & \left|\frac1D\left(C(Y_1)+P_c(Y_1)-C(Y_2)-P_c(Y_2)\right)\right| \\ + 
  & \left|\frac1D (C(Y_2)+P_c(Y_2)- C(Y_1)-P_c(Y_1))\bigg|>\frac4D \right.      \\
  \Rightarrow & 2\left|\frac1D\left(C(Y_1)+P_c(Y_1)-C(Y_2)-P_c(Y_2)\right)\right|>\frac4D \\
  \Rightarrow &  |C(Y_1)+P_c(Y_1)- C(Y_2)-P_c(Y_2)|> 2 \\
  \Rightarrow &  |C(Y_1) - C(Y_2)| > 2 - |P_c(Y_1) -P_c(Y_2)|
\end{aligned}
\end{align}

Considering that both $P_c(Y_1)$ and $P_c(Y_2)$ are probabilities, i.e., $P_c(Y_1) \in [0,1]$ and $P_c(Y_2) \in [0,1]$, thus we have $|P_c(Y_1)-P_c(Y_2)| \leq 1$, thus resulting in $|C(Y_1) - C(Y_2)| > 1$. Given that $C(Y)$ is the result of the classifier $C$ applied to sequence $Y$ and that its value is an integer, we can obtain the conclusion that $|C(Y_1)-C(Y_2)|>1$ is equivalent to $C(Y_1) \neq C(Y_2)$. This shows that sequence $Y_1$ and sequence $Y_2$ correspond to different categories of motions.
\end{proof}

\textit{Theorem 2} demonstrates that, under the condition of the smooth transition, if the Manhattan distance between two generated in-betweening motions in the objective space exceeds $\frac4D$, the human motion sequences are likely to belong to different motion categories. This finding suggests that a diverse set of Pareto optimal motions from \eqref{EqMulObj2} not only maintains smooth transitions but also provides a variety of motion categories. The multi-criteria optimization module is particularly suited to achieving this goal, offering a robust mechanism for generating smooth and diverse in-betweening motions.

\textbf{\textit{Remark 1}}: \textit{Although we have presented two theorems to explain the properties of the constructed multi-criteria optimization problem, some readers may still find it challenging to grasp. To enhance clarity, we provide an intuitive explanation in the appendix to illustrate the underlying principles of the constructed problem. For further details, please refer to Section B of the appendix.}


\subsection{Generative Optimization of Diverse In-Betweening Human Motions}
In the proposed MCG-IMM framework, we propose a generative optimization approach to solve the formulated multi-criteria optimization problem, preserving diversity within the criteria space.
The designed MCG-IMM is illustrated in Fig. \ref{fig:moea}. In general, the details for the generative optimization approach is as follows:
\begin{enumerate}
    \item \textbf{\textit{Initialization}}: The pretrained generative model is employed to produce initial samples of human motion sequences. These sequences are subsequently evaluated using the criteria defined in equation~\eqref{EqMulObj2}.

    \item \textbf{\textit{Offspring Generation}}: Generating a set of offspring human motion sequences using the pretrained generative model.

    \item \textbf{\textit{Evaluation}}: Evaluating the offspring using the criteria defined in equation~\eqref{EqMulObj2}.

    \item \textbf{\textit{Selection}}: Performing fast nondominated sorting and calculating the crowding distance for the sequences in both the current samples and the offspring samples. Subsequently, the samples will be updated through the elite selection.

    \item Repeating steps 2 to 4 until the termination condition is met.
\end{enumerate}

\begin{figure*}[!htb]
    \centering
    \includegraphics[width=0.9\linewidth]{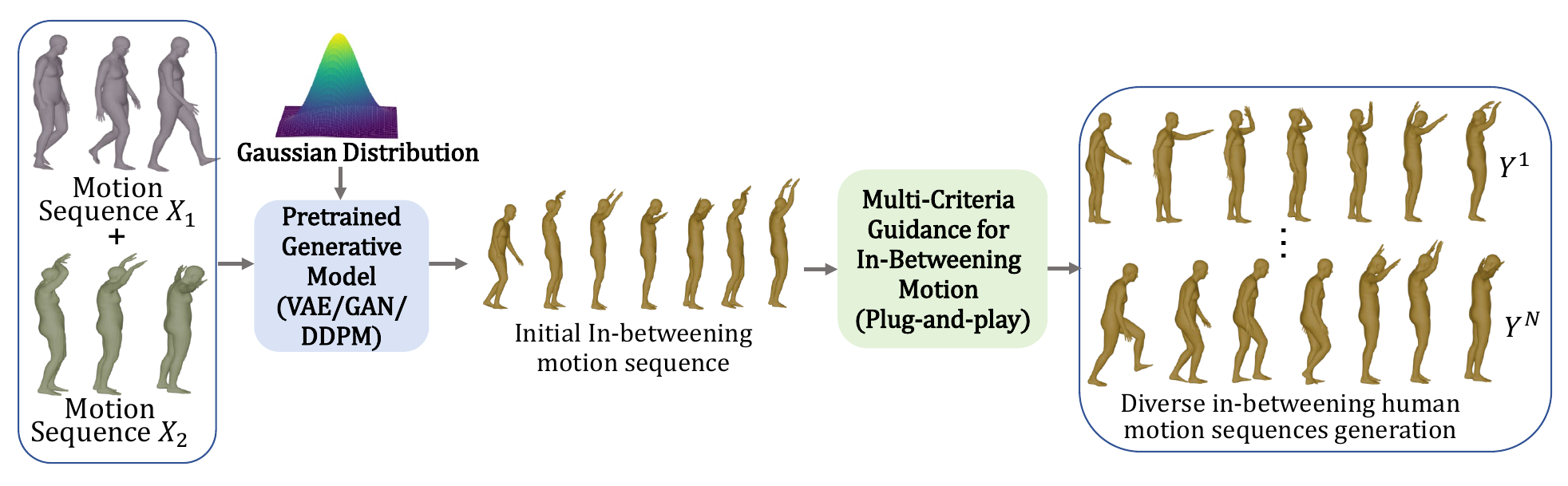}
    \caption{The framework of MCG-IMM. We first randomly sample from the Gaussian distribution, and use the pretrained generator to generate the initial in-betweening motion sequence. Then, the designed multi-criteria generation framework is utilized to explore multiple optimal solutions for more diverse sequences. }
    \label{fig:overview}
\end{figure*}

The uniqueness of the proposed method lies in the incorporation of a generative model into the optimization process. Specifically, the generation process is formulated as follows:
\begin{equation}\label{eqn:ea}
    \left.{Y}_i^{[n]}\sim\left\{\begin{array}{ll}G\left(Y_i\mid X_1, X_2\right) & i=0 \\
    G\left(Y_i\mid Y_{i-1}^{[n]}, X_1, X_2 \right) & 1\leq i \le I \end{array}\right.\right. ,   \\
\end{equation}
where $G$ is the generative model, $Y_{i}^{[n]}$ ($n \in \{ 1,...,N \}$) is the $n$th human motion sequence in the samples at the $i$th iteration. It can be observed that in the first iteration, i.e., during the initialization process, the human motion sequence is generated by the conditional generative model on the user-provided sequences $X_1$ and $X_2$. In subsequent iterations, i.e., during the evolution process, the generative model is conditioned not only by the user-provided sequences but also by the sequences already generated and present in the current samples. This approach allows the generative model to search for Pareto-optimal sequences based on the elite sequences identified so far, thereby driving the evolution of the samples. Moreover, the crowding distance also contributes to maintaining the criteria space diversity of the generated human motions. After achieving the max iterations, we can get $N$ optimal solutions, i.e., diverse and smooth in-betweening human motion sequences. 

Moreover, to allow for more flexible in-betweening human motion generation, the length of the generated motion sequence is variable, as different action sequences may require different transition lengths. The desired motion length is estimated according to the following steps: (1) calculating the cosine similarity $S \in [0,1]$ to measure the similarity between two given motion poses $X_1 [-1]$ and $X_2 [0]$, and (2) based on the preset minimum sequence length $Y_{min}$ and the preset maximum sequence length $Y_{max}$, the desired length of the human motion sequence is set as follows:
\begin{equation}
    Y_{len}= Y_{min} + \left \lfloor(Y_{max}- Y_{min}) \times (1-S) \right \rfloor.
\end{equation}
In our experiment, $Y_{min}$ is set to 5, $Y_{max}$ is set to 15. 
In the generation of the offspring, the $Y_{len}$ is encoded into the pretrained generative model and has to be consistent with the training process details. So we introduce a padding operation to the $Y_{len}$, specifically, we repeat the last pose of the $Y_{len}$ until it achieves the $Y_{max}$. 
Note that the proposed structure is readily adaptable to existing Conditional generative models, such as VAE, GAN and DDPM, by aligning their loss functions and training protocols with standard implementations. Detailed settings using MCG-IMM with the pretrained conditional VAE, GAN, and DDPM generation model are provided in Appendix.

\begin{algorithm}[t]
    \caption{In-Betweening Motion Generation with Multi-Criteria Guidance}
    \label{alg1}
    \begin{algorithmic}[1]
        \Require {Generative model $G$, user-provided motion sequences $X_1$ and $X_2$, number of generations $\tau_{max}$, number of samples $l$, number of offspring $m$ in each iteration.}
        \Ensure {In-betweening human motion sequences $Y^{[l]}$}
        \State Generate initial human motion sequences $Y_{i=0}^{[l]}$ via the generator, i.e., $Y_{i=0}^{[l]}\sim G(Y_{i=0}|X_1, X_2), (l \in \{1,...,N\})$
        \State Evaluate $Y_{i=0}^{[l]}$ using the designed criteria $F_1(Y_{i=0}^{[l]})$ and $F_2(Y_{i=0}^{[l]})$;
        \While {$i\le \tau_{max}$}
        \State Generate $m$ new motion sequences by the generator based on randomly selected human motion sequences $[l']$ in the samples, i.e., $Y_{i}^{[m]}\sim G(Y_{i}^{[m]}|Y_{i-1}^{[l']}, X_1, X_2)$;
        \State Evaluate the new motion sequences $Y_{i}^{[m]}$ using the designed multiple criteria $F_1(Y_{i}^{[m]})$ and $F_2(Y_{i}^{[m]})$;
        \State Select top $l$ motion sequences from $Y_{{i}}^{[l]} \cup Y_{{i}}^{[m]}$ using nondominated rank and crowding distance, thus forming the new samples $Y_{{i+1}}^{[l]}$
        \EndWhile \\
        \Return $Y^{[l]}_i$
    \end{algorithmic}
\end{algorithm}

\subsection{A Summary for the Proposed Method}
In summary, our task aims to generate diverse and smooth in-betweening human motions given the user-provided human motion frames. 
The overflow of the proposed method is briefly described in Fig. \ref{fig:overview}.
Specifically, given the user-provided motion sequence $X_1$ and $X_2$, we first generate the initial in-betweening human motions $Y$ through a designed and pretrained generative model (such as CVAE, GAN and DDPM).
To enhance the diversity and smoothness of the generated motion sequence, we transform this task into a multi-criteria optimization problem, and design a plug-and-play module to guide the in-betweening motion generation. We have demonstrated that for any Pareto optimal solutions of the formulated problem, the corresponding human motion sequences can support a smooth and natural transition to interpolate user-provided sequences. Therefore, the multi-criteria optimization process can capture diverse optimal solutions, resulting in the generation of diverse in-betweening human motion sequences.
It is significant to note that the proposed method can enhance the diversity of generated human motions based on generative models without introducing additional training processes and parameters.

\begin{table*}[!t]
\centering
\caption{Quantitative comparison results. $FID_{tr}$ and $FID_{te}$ refer to the FID scores obtained from the generation to train and test datasets, respectively. The best results are in bold.}
\label{tab:sota}
\resizebox{16.5cm}{!}{
\begin{tabular}{c|ccccc|ccccc}
\hline
\multirow{2}{*}{Method}                               & \multicolumn{5}{c|}{BABEL}                                                                      & \multicolumn{5}{c}{HAct12}                                                                    \\ \cline{2-11} 
                                                      & $FID_{tr}\downarrow$ & $FID_{te}\downarrow$ & $ACC\uparrow$  & $ADE\downarrow$ & $APD\uparrow$ & $FID_{tr}\downarrow$ & $FID_{te}\downarrow$ & $ACC\uparrow$ & $ADE\downarrow$ & $APD\uparrow$ \\ \hline
RMI \cite{harvey2020robust}          & 37.09                & 30.15                & 1.51           & 1.21            &

0.79          & 245.35               & 298.06               & 24.51          & 1.38            & 0.60          \\
MITT \cite{qin2022motion}            & 32.21                & 27.10                & 0.73           & 0.99            & 0.91          & 254.72               & 143.71               & 22.73          & 1.39            & 0.53          \\
Motion DNA \cite{zhou2020generative} & 27.04                & 23.25                & 16.2           & 1.12            & 0.67          & 247.68               & 139.89               & 24.46          & 1.34            & 0.87          \\
ACTOR \cite{petrovich2021action}     & 29.34                & 30.31                & 40.9           & 2.29            & 2.71          & 248.81               & 381.56               & 44.41          & 1.54            & 0.95          \\
WAT (RNN) \cite{mao2022weakly}       & 22.54                & 22.39                & 49.6           & 1.47            & 1.74          & 129.95               & 164.38               & 59.02          & 1.23            & 0.96          \\
WAT (Trans.) \cite{mao2022weakly}    & 20.02                & 19.41                & 39.5           & 1.40            & 1.82          & 141.85               & 139.82               & 56.87          & 1.26            & 0.88          \\
MultiAct \cite{lee2023multiact}      & 16.39                & 19.12                & 73.67          & 1.77            & 5.42          & 174.76               & 243.82               & 68.62         & 1.35            & 1.76          \\
MoFusion \cite{dabral2023mofusion}   & 15.49                & 15.12                & 74.71          & 1.07            & 6.45          & 125.41               & 134.14               & 67.14          & 1.07            & 1.45          \\ \hline
MCG-IMM (VAE)                                         & \textbf{14.13}       & 15.71                & 74.21          & 1.12            & 6.14          & \textbf{115.36}      & 135.26               & 67.27          & 1.21            & 1.81          \\
MCG-IMM (GAN)                                         & 14.46                & 14.49                & 75.16          & 1.10            & 6.96          & 116.36               & 130.49               & 68.91          & 1.20            & 1.92          \\
MCG-IMM (DDPM)                                        & 14.32                & \textbf{13.65}       & \textbf{77.32} & \textbf{1.01}   & \textbf{8.45} & 116.16               & \textbf{121.34}      & \textbf{70.26} & \textbf{1.17}   & \textbf{2.45} \\ 
\hline
\hline
\multirow{2}{*}{Method}                               & \multicolumn{5}{c|}{NTU}                                                                        & \multicolumn{5}{c}{GRAB}                                                                      \\ \cline{2-11} 
                                                      & $FID_{tr}\downarrow$ & $FID_{te}\downarrow$ & $ACC\uparrow$  & $ADE\downarrow$ & $APD\uparrow$ & $FID_{tr}\downarrow$ & $FID_{te}\downarrow$ & $ACC\uparrow$ & $ADE\downarrow$ & $APD\uparrow$ \\ \hline
RMI \cite{harvey2020robust}          & 144.98               & 113.61               & 66.3           & 1.11            & 1.19          & 132.28               & 102.54               & 75.1          & 1.24            & 1.79          \\
MITT \cite{qin2022motion}            & 151.11               & 157.54               & 70.6           & 1.20            & 1.21          & 162.34               & 148.32               & 62.1          & 1.21            & 1.19          \\
Motion DNA \cite{zhou2020generative} & 147.64               & 147.92               & 68.7           & 1.19            & 1.09          & 157.99               & 146.18               & 60.7          & 1.08            & 0.97          \\
ACTOR \cite{petrovich2021action}     & 355.69               & 193.58               & 66.3           & 1.49            & 2.07          & 342.73               & 181.45               & 78.1          & 1.45            & 2.13          \\
WAT (RNN) \cite{mao2022weakly}       & 72.18                & 111.01               & 76.0           & 1.20            & 2.20          & 70.12                & 101.24               & 76.0          & 1.18            & 2.05          \\
WAT (Trans.) \cite{mao2022weakly}    & 83.14                & 114.62               & 71.3           & 1.23            & 2.19          & 79.17                & 114.52               & 69.1          & 1.23            & 2.19          \\
MultiAct \cite{lee2023multiact}      & 374.73               & 530.09               & 63.9           & 1.41            & 2.73          & 374.73               & 530.09               & 64.3          & 1.32            & 2.54          \\
MoFusion \cite{dabral2023mofusion}   & 75.46                & 108.12               & 77.71          & 1.06            & 2.75          & 68.12                & 95.01                & 79.21         & 1.09            & 2.01          \\ \hline
MCG-IMM (VAE)                                         & \textbf{81.64}       & \textbf{107.16}      & 74.9           & 1.09            & 2.72          & 69.46                & 98.16                & 78.5          & 1.13            & 2.37          \\
MCG-IMM (GAN)                                         & 79.47                & 107.51               & 76.4           & 1.05            & 2.86          & 67.19                & 96.49                & 80.1          & 1.08            & 2.42          \\
MCG-IMM (DDPM)                                        & \textbf{71.90}       & \textbf{107.32}      & \textbf{79.8}  & \textbf{1.03}   & \textbf{2.94} & \textbf{65.32}       & \textbf{94.72}       & \textbf{85.9} & \textbf{1.04}   & \textbf{2.59} \\ \hline
\end{tabular}
}
\end{table*}

\section{Experiments and Design}
\label{ExpDesign}

\subsection{Datasets}
The experiments are performed on four widely employed motion datasets: BABEL \cite{punnakkal2021babel}, HumanAct12 \cite{guo2020action2motion}, NTU RGB-D \cite{liu2019ntu}, GRAB \cite{taheri2020grab}.

\textbf{BABEL} is a subset of the AMASS dataset with per-frame action annotations. In this work, we split the dataset into two parts: single-action sequences and transition sequences between two actions. We downsample all motion sequences to $30$ Hz. For single-action motions, we divide the long motions into several short ones. Each short motion performs one single action, and the remove too short sequences ($< 1$ second). We also eliminate the action labels with too few samples ($< 60$) or overlap with other actions. After processing, we have $20$ action categories. 

\textbf{HumanAct12} contains $12$ subjects in which $12$ categories of actions with per-sequence annotation are provided. The sequences with less than 35 frames are removed, which results in $727$ training and $197$ testing sequences. Subjects P1 to P10 are used for training, P11 and P12 are used for testing.

\textbf{NTU RGB-D} originally contains over 100,000 motions with 120 classes whose pose annotations are from MS Kinect readout, which makes the data highly noisy and inaccurate. To facilitate training, our operation is consistent with the operation in \cite{kocabas2020vibe}, which re-estimates the 3D positions of 18 body joints (i.e. 19 bones) from the point cloud formed by aligning synchronized video feeds from multiple cameras. Note the poses are not necessarily matched perfectly to their true poses. It is sufficient here to be perceptually natural and realistic.

\textbf{GRAB} consists of 10 subjects interacting with 51 different objects, performing 29 different actions. Since, for most actions, the number of samples is too small for training, we choose the four action categories with the most motion samples, i.e., Pass, Lift, Inspect and Drink. We use 8 subjects (S1-S6, S9, S10) for training and the remaining 2 subjects (S7, S8) for testing. In all cases, we remove the global translation. The original frame rate is 120 Hz. To further enlarge the size of the dataset, we downsample the sequences to 15-30 Hz.

\subsection{Parameter Settings}
For the experiment settings, the batch size for training the conditional DDPM model is set to 128 and the number of used human joints is 16. The Transformer uses 8 attention heads. The proposed method is implemented using the PyTorch framework in Python 3.6. To ensure convergence, the Adam optimizer is used to train the model, with an initial learning rate of $10^{-2}$ that decays by 0.98 every 10 epochs. The training is conducted for 500 epochs. In this work, the parameter $\beta$ is set from 0.0001 to 0.05, where $\beta_K$ are linearly interpolated (1 $\textless$ $k$ $\textless$ $K$), which aims to ensure that the reverse process fluctuates slightly.
For the MCG-IMM, the $I_{max}$ is set to 20, the nondominated sorting and the crowding distance \cite{deb2002fast} are utilized in the environment selection. The samples size is set to 20.
All the inference processes are conducted on the NVIDIA Tesla A100 GPU.
The keyframes of the user-provided sequence are set to 5 in our experiment, and can be variable according to different tasks.

\begin{table*}[!t]
\caption{Ablation studies for the influence of different lengths on the BABEL dataset. }
\label{tab:AblationDiffLength}
\centering
\resizebox{16cm}{!}{\begin{tabular}{c|ccccc|ccccc}
\hline
              & \multicolumn{5}{c|}{ Fixed length $N = 20$}                                                               & \multicolumn{5}{c}{Variable length}                                                                 \\ \cline{2-11}
              & $FID_{tr}\downarrow$ & $FID_{te}\downarrow$ & $ACC\uparrow$ & $ADE\downarrow$ & $APD\uparrow$ & $FID_{tr}\downarrow$ & $FID_{te}\downarrow$ & $ACC\uparrow$ & $ADE\downarrow$ & $APD\uparrow$ \\
              \cline{1-11}
RMI \cite{harvey2020robust}          & 37.09                & 30.15                & 1.51         & 1.21           & 0.79         & 36.87                & 30.03                & 2.91         & 0.59           & 1.51         \\
MITT \cite{qin2022motion}         & 32.21                & 27.10                & 0.73         & 0.99           & 0.91             & 31.62                & 26.64                & 1.66         & 0.65           & 1.61             \\
CMIB \cite{kim2022conditional}         & 21.07                & 23.60                & 1.96         & 1.21           & 2.46             & 20.42                & 21.34                & 2.04         & 0.91           & 2.94             \\
MultiAct \cite{lee2023multiact}     & 16.39                & 19.12                & 73.70        & 1.77          & 3.01         & 15.98                & 13.62                & 73.99         & 0.61           & 5.42         \\
\cline{1-11}
MCG-IMM (DDPM)          &  14.91          &   13.72                   &  75.81            &  1.09              &  7.24             & 14.32 & 13.65 & 77.32 & 1.01   & 8.45           \\
\hline
\end{tabular}}
\end{table*}
\vspace{-5pt}

\subsection{Metrics}
The accuracy and diversity of the generated motion sequences are essential for in-betweening motion task. For comparison, we employ the metrics to facilitate the evaluation of MCG-IMM, i.e., \textit{Frechet Inception Distance} (FID), \textit{Action Accuracy} (ACC) and \textit{Average Displacement Error} (ADE). 
Specifically, APD aims to evaluate the diversity performance of MCG-IMM. ACC, FID and ADE aim to evaluate the accuracy performance of MCG-IMM. For APD and ACC metrics, a higher value is better. For FID and ADE metrics, a lower value is better.

\begin{enumerate}
    \item \textit{Frechet Inception Distance} (FID). FID is the distribution similarity between the predicted sequences and the ground-truth motions:
\begin{equation}
\label{equ:12}
\begin{aligned}
FID= & \|\mu_{\mathrm{gen}} - \mu_{\mathrm{gt}}\|^{2}  + \\
& Tr(\Sigma_{\mathrm{gen}}+\Sigma_{\mathrm{gt}}-2(\Sigma_{\mathrm{gen}}\Sigma_{\mathrm{gt}})^{1/2}),
\end{aligned}
\end{equation}
where $\mu \in \mathbb{R}^{F}$ and $\sigma \in \mathbb{R}^{F \times F}$ are the mean and covariance matrix of perception features obtained from a pretrained motion classifier model with $F$ dimension of the perception features. $Tr(\cdot)$ denotes the trace of a matrix. In the experiment, we report the FID of generation to train ($FID_{tr}$) and test ($FID_{te}$) datasets, respectively.

    \item \textit{Action Accuracy} (ACC). To evaluate motion realism, we report the action recognition accuracy of the generated motions using the same pretrained action recognition model.
    \item \textit{Average Displacement Error} (ADE). ADE is the $L2$ distance between the predicted motion and ground-truth motion to measure the accuracy of the whole sequence:
    \begin{equation}
    ADE=\min_{i=(1, 2, 3, \cdots, N)}\frac1T\sum_{k=1}^{T}\|\hat{\mathbf{Y}}_k^i-\mathbf{Y}_k^i\|_2,
    \end{equation}
    where $Y$ is the Ground Truth motion sequence.
    $\hat{Y}$ is the predicted motion sequence. $N$ is the final number of predicted motion sequences.
    \item \textit{Average Pairwise Distance} (APD). APD is the average $L2$ distance between all the generation pairs, it is used to measure the diversity:
\begin{equation}
    \label{equ:13}
    APD=\frac1{N(N-1)}\sum_{i=1}^N\sum_{j=1,j\neq i}^N||(\hat{\mathbf{Y}})^i-(\hat{\mathbf{Y}})^j||,
    \end{equation} 
\end{enumerate}

\begin{table*}[!t]
\centering
\caption{Ablation studies for the influence of the multi-criteria generation framework on the performance of our methods. \textit{w.o.} means ``without'', \textit{w.} means ``with''. }
\label{tab:AblationDiffGen}
\resizebox{15.5cm}{!}{
\begin{tabular}{cc|ccccc|ccccc}
\hline
\multicolumn{2}{c|}{\multirow{2}{*}{Method}}                          & \multicolumn{5}{c|}{BABEL}                                                                     & \multicolumn{5}{c}{HAct12}     \\ 
\cline{3-12} 
\multicolumn{2}{c|}{}                                                 & $FID_{tr}\downarrow$ & $FID_{te}\downarrow$ & $ACC\uparrow$ & $ADE\downarrow$ & $APD\uparrow$ & \multicolumn{1}{c}{$FID_{tr}\downarrow$} & \multicolumn{1}{c}{$FID_{te}\downarrow$} & \multicolumn{1}{c}{$ACC\uparrow$} & \multicolumn{1}{c}{$ADE\downarrow$} & \multicolumn{1}{c}{$APD\uparrow$} \\ \hline
\multirow{2}{*}{MCG-IMM (VAE)}  & \textit{w.o.} EMG & 16.45                & 18.45                & 73.14         & 1.42            & 3.46          & \multicolumn{1}{c}{134.27}               & \multicolumn{1}{c}{147.62}               & \multicolumn{1}{c}{58.14}         & \multicolumn{1}{c}{1.36}            & \multicolumn{1}{c}{0.86}          \\
                                & \textit{w.} EMG   & 14.13                & 15.71                & 74.21         & 1.12            & 6.14          & \multicolumn{1}{c}{115.36}               & \multicolumn{1}{c}{135.26}               & \multicolumn{1}{c}{67.27}         & \multicolumn{1}{c}{1.21}            & \multicolumn{1}{c}{1.81}          \\ \hline
\multirow{2}{*}{MCG-IMM (GAN)}  & \textit{w.o.} EMG & 16.13                & 17.16                & 73.56         & 1.36            & 4.02          & \multicolumn{1}{c}{125.86}               & \multicolumn{1}{c}{137.94}               & \multicolumn{1}{c}{66.13}         & \multicolumn{1}{c}{1.35}            & \multicolumn{1}{c}{1.74}          \\
                                & \textit{w.} EMG   & 14.46                & 14.49                & 75.16         & 1.10            & 6.96          & \multicolumn{1}{c}{116.36}               & \multicolumn{1}{c}{130.49}               & \multicolumn{1}{c}{68.91}          & \multicolumn{1}{c}{1.20}            & \multicolumn{1}{c}{1.92}          \\ \hline
\multirow{2}{*}{MCG-IMM (DDPM)} & \textit{w.o.} EMG & 15.51                & 14.61                & 75.04         & 1.07            & 6.84          & \multicolumn{1}{c}{124.65}               & \multicolumn{1}{c}{131.74}               & \multicolumn{1}{c}{67.46}          & \multicolumn{1}{c}{1.36}            & \multicolumn{1}{c}{1.84}          \\
                                & \textit{w.} EMG   & 14.32                & 13.65                & 77.32         & 1.04            & 8.45          & \multicolumn{1}{c}{116.16}               & \multicolumn{1}{c}{121.34}               & \multicolumn{1}{c}{70.26}          & \multicolumn{1}{c}{1.17}            & \multicolumn{1}{c}{2.45}          \\ \hline
\end{tabular}
 }
\end{table*}

\begin{table*}[!htpb]
\centering
\caption{Ablation studies for the influence of the intra-class difference. \textit{w.o.} means ``without'', \textit{w.} means ``with''. }
\label{tab:AblationIntra}
\resizebox{15.5cm}{!}{
\begin{tabular}{cc|ccccc|ccccc}
\hline
\multicolumn{2}{c|}{\multirow{2}{*}{Method}}                          & \multicolumn{5}{c|}{BABEL}                 & \multicolumn{5}{c}{HAct12}  \\ 
\cline{3-12} 
   &  & $FID_{tr}\downarrow$ & $FID_{te}\downarrow$ & $ACC\uparrow$ & $ADE\downarrow$ & $APD\uparrow$ & $FID_{tr}\downarrow$ & $FID_{te}\downarrow$ & $ACC\uparrow$ & $ADE\downarrow$ & $APD\uparrow$ \\ \hline
\multirow{2}{*}{MCG-IMM (VAE)}  & \textit{w.o.} intra-class & 15.94                & 17.04                & 73.46         & 1.34            & 5.97          & 129.27               & 142.64               & 61.14         & 1.35            & 1.58          \\
                                & \textit{w.} intra-class   & 14.13                & 15.71                & 74.21         & 1.12            & 6.14          & 115.36               & 135.26               & 67.27         & 1.21            & 1.81          \\ \hline
\multirow{2}{*}{MCG-IMM (GAN)}  & \textit{w.o.} intra-class & 16.24                & 16.42                & 73.95         & 1.29            & 6.46          & 124.36               & 133.17               & 66.13         & 1.29            & 1.76          \\
                                & \textit{w.} intra-class   & 14.46                & 14.49                & 75.16         & 1.10            & 6.96          & 116.16               & 130.49               & 68.91         & 1.20            & 1.92          \\ \hline
\multirow{2}{*}{MCG-IMM (DDPM)} & \textit{w.o.} intra-class & 15.03                & 14.19                & 76.36         & 1.06            & 7.49          & 121.45               & 129.58               & 67.42         & 1.64            & 2.14          \\
                                & \textit{w.} intra-class   & 14.32                & 13.65                & 77.32         & 1.04            & 8.45          & 116.16               & 121.34               & 70.26         & 1.17            & 2.45          \\ \hline
\end{tabular}
}
\end{table*}

\subsection{Baseline Methods}
In this work, we compare the proposed method with the state-of-the-art human motion in-betweening generation methods as baseline methods, including RMI \cite{harvey2020robust}, MITT \cite{qin2022motion}, Motion DNA \cite{zhou2020generative}, ACTOR \cite{petrovich2021action}, WAT \cite{mao2022weakly}, MultiAct \cite{lee2023multiact}, MoFusion \cite{dabral2023mofusion}. 
%

\section{Results and Analysis}
\label{Results}

\subsection{Quantitative Results }
Table \ref{tab:sota} summarizes the comparative performance of the proposed MCG-IMM method against various baselines for in-betweening motion generation task.
The results demonstrate that the MCG-IMM method achieves state-of-the-art performance across all evaluation metrics.
Specifically, the diversity metric (APD) improvements are more obvious than the accuracy metrics.
This superiority highlights the effectiveness of the proposed framework. By formulating the in-betweening task as a multi-criteria problem, MCG-IMM can generate motion sequences that satisfy the proposed different criteria.
Interestingly, the MCG-IMM method exhibits relatively weaker performance on the NTU RGB-D dataset compared to other datasets. This is likely due to the higher level of noise and artifacts present in NTU RGB-D, which poses additional challenges for MCG-IMM.
On the other datasets, the MCG-IMM method demonstrates substantial improvements in both the accuracy and diversity metrics. This is a direct result of the multi-criteria optimization approach, which can explore and provide a diverse set of solutions between the competing criteria.
Overall, the comprehensive evaluation on multiple human motion datasets shows the superiority of the proposed MCG-IMM method for in-betweening human motion generation task. The multi-criteria optimization function proves to be a highly effective technique for this task.

In addition, the results in Table \ref{tab:sota} also report the performance of MCG-IMM using different generative models. Specifically, the table compares MCG-IMM when using VAE, GAN, and DDPM as the base generative models.
The results reveal that the MCG-IMM approach consistently outperforms other in-betweening methods regardless of the specific generative model. This indicates that the multi-criteria optimization method introduced in MCG-IMM effectively enhances performance of the generated human motions.

\begin{figure}
    \centering
    \includegraphics[width=1\linewidth]{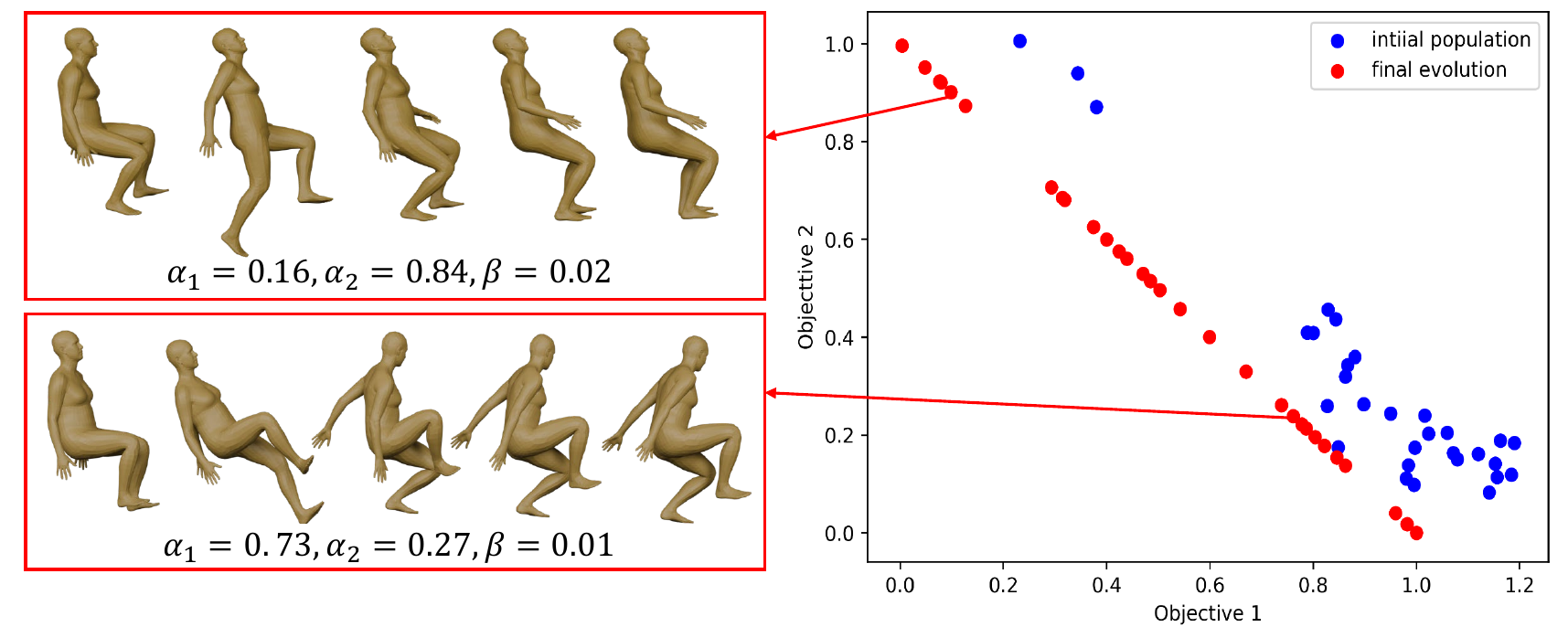}
    \caption{Pareto front of the optimization process under the ``Sit'' motion sequences.
    The blue points denote the initial samples. The red points denote the final samples. $\alpha$ and $\beta$ denote the value in multi-criteria optimization problem.
    }
    \label{fig:pareto}
\end{figure}

\subsection{Ablation Studies}
The paper conducts ablation studies to analyze the contributions of the different components within the proposed MCG-IMM method. 
As shown in Table \ref{tab:AblationDiffLength}, experiments compare the influence of different methods under variable and fixed length of transition sequences, including RMI \cite{harvey2020robust}, MITT \cite{qin2022motion}, MultiAct \cite{lee2023multiact} and CMIB \cite{kim2022conditional}.
The comparison results clearly show that allowing variable length transition sequence generation leads to enhanced performance compared to the fixed length. This indicates the importance of flexibility in the length of in-betweening motions for achieving high-quality results.

Table \ref{tab:AblationDiffGen} aims to assess the impact of the introduced multi-criteria guidance.
Specifically, we evaluated the performance of the MCG-IMM combined with various generative models, including VAE, GAN, and DDPM. The results show that the inclusion of the MCG-IMM consistently improves performance, particularly in terms of the diversity of the generated motion sequences within a bath process.
These ablation findings further report the effectiveness of our design choices.

To enhance diversity within the same motion category, the proposed multi-criteria formulation explicitly incorporates intra-class variations in the diversity component. This is achieved by utilizing the predicted class probabilities from a motion classifier, which are integrated into the objective function via the term $P_c(Y)$ in Equation \ref{EqMulObj1}. The design is applicable across different generative models, guaranteeing that both inter-class and intra-class differences contribute to the diversity of the generated motion sequences.
Table \ref{tab:AblationIntra} reports the influence of intra-class difference.
From the comparison results, we can observe that introducing this intra-class difference term into the optimization process can effectively enhance the performance of the generated motion sequences. This finding highlights the importance of explicitly modeling the inherent variations within the same motion class.

\begin{figure}
    \centering
    \includegraphics[width=1\linewidth]{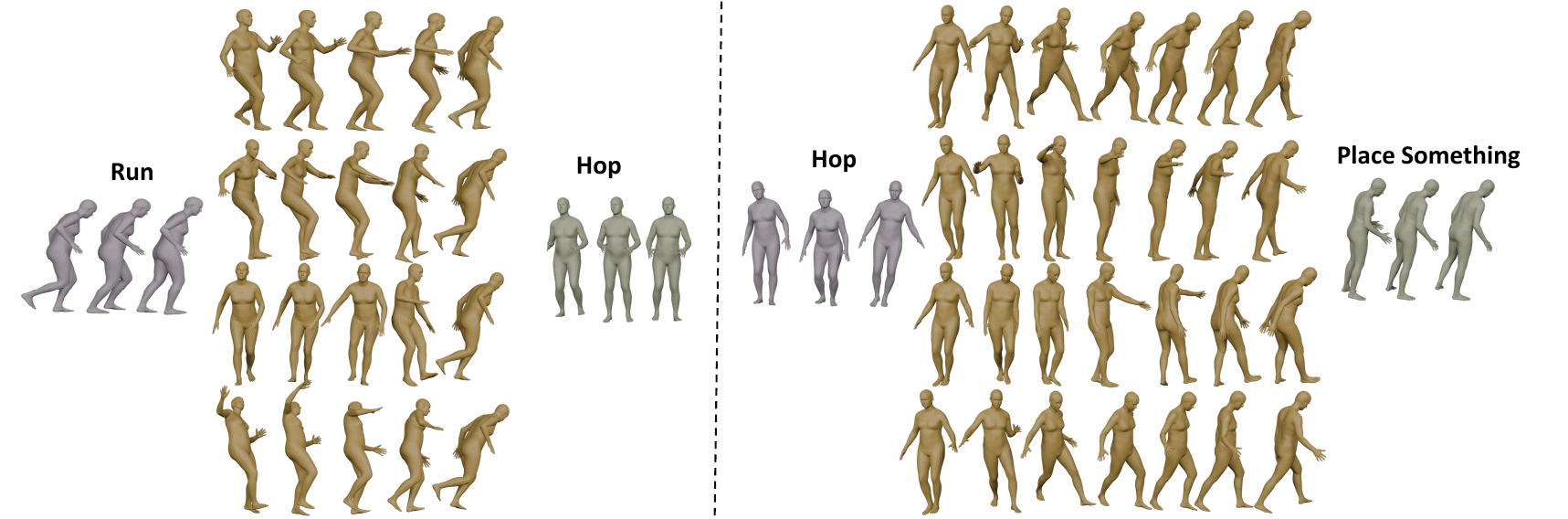}
    \caption{The generated diverse in-betweening human motion sequences given the provided conditions.}
    \label{fig:inbetweeing}
\end{figure}

\subsection{Qualitative Results}
In the qualitative analysis, we first visualize the Pareto front obtained from the multi-criteria optimization process.
Fig. \ref{fig:pareto} plots the distribution of the 20 human motion sequences in the samples. The blue points denote the initial samples, which are the generated motion sequences without the multi-criteria guidance. The red points represent the generated in-betweening human motion sequences after the multi-criteria guidance.
As shown in this figure, the initial samples are unable to approximate the Pareto front effectively, and the distribution of the solutions is rather irregular. In contrast, the MCG-IMM approach is much more effective in locating diverse optimal solutions along the Pareto front.
In addition, we show two points on the Pareto front that are far apart, which correspond to two human motion sequences. This demonstrates its superior ability to balance the trade-offs between the competing objectives during the in-betweening human generation task.

In addition, we also visualize the in-betweening human motion sequences given the user-provided motion sequences.
As illustrated in Fig. \ref{fig:inbetweeing}, the left side of the figure shows the transition from the ``Run" to the ``Hop" action, while the right side displays the transition from the ``Hop" to the ``Place" action. Since the two actions on the left have a small difference, the generated transition motion sequence is relatively short. The comparison results show that our method can generate variable lengths in-betweening human motion sequences.
These qualitative results demonstrate that our method can generate more diverse human motion sequences with the introduced multi-criteria guidance.

\section{Conclusion}
\label{conclusion}

This paper presents MCG-IMM, a plug-and-play multi-criteria guidance framework for generating diverse and smooth in-betweening human motions. Unlike prior approaches that rely on architectural modifications or additional training procedures, MCG-IMM operates entirely at the sampling stage of pretrained generative models—introducing no extra parameters and requiring no retraining. By formulating the in-betweening task as a multi-criteria optimization problem, MCG-IMM simultaneously accounts for motion diversity and smoothness, and efficiently solves this problem via evolutionary optimization techniques.
MCG-IMM is model-agnostic and compatible with various generative backbones, including DDPM, VAE, and GAN. Empirical evaluations across four widely used human motion datasets demonstrate that our method consistently improves both intra-batch diversity and temporal consistency, outperforming existing state-of-the-art approaches. Extensive studies further validate the individual contributions of each component, including intra-class variation modeling and variable-length generation.





\bibliographystyle{unsrt}
\bibliography{cite}

\vfill

\end{document}